\newcommand{\R}{\mathbb{R}}
\newtheorem{theorem}{Theorem}[section]
\newtheorem{definition}{Definition}[section]
\newtheorem{example}[definition]{Example}
\newtheorem{remarkth}[definition]{Remark}
\newtheorem{algorithm}[definition]{Algorithm}
\newenvironment{remark}{\begin{remarkth}\upshape}{\end{remarkth}}
\newcommand{\dfb}{\stackrel{\Delta}{=}}
\newcommand{\Flder}{\rightarrow}
\newcommand{\proa}{A^*G \mbox{$\;$}_{\tau^*} \kern-3pt\times_\alpha
G \mbox{$\;$}_\beta \kern-3pt\times_{\tau^*} A^*G}
\newcommand{\lvec}[1]{\overleftarrow{#1}}
\newcommand{\e}{\mbox{exp}}
\newcommand{\al}{\mathfrak{g}}
\title{On the observability of relative positions in left-invariant multi-agent control systems and its application to formation control}
\author{Leonardo Colombo, Hector Garcia de Marina, Mar\'ia Barbero Li\~n\'an and David Mart\'in de Diego
\thanks{L. J. Colombo (leo.colombo@icmat.es) and D. Mart\'in de Diego (d.martin@icmat.es) are with Instituto de Ciencias Matem\'aticas (CSIC-UAM-UCM-UC3M), Calle Nicol\'as Cabrera 13-15, Campus Cantoblanco, 28049, Madrid, Spain. H. Garc\'ia de Marina (hgm@mmmi.sdu.dk) is with Unmanned Aerial Systems Center, The Maersk McKinney Moller Institute, University of Southern Denmark, Odense, Denmark. M. Barbero  Li\~n\'an (m.barbero@upm.es) is with Departamento de Matem\'atica Aplicada, Universidad Polit\'ecnica de Madrid, Av. Juan de Herrera 4, 28040 Madrid, Spain} 
\thanks{This work was partially supported by I-Link 
Project (Ref: linkA20079) from CSIC; Ministerio de Econom\'ia, Industria y
Competitividad (MINEICO, Spain) under grant MTM2016-76702-P; ``Severo Ochoa Programme for Centres of Excellence'' in R$\&$D (SEV-2015-0554) and by a grant from Programa Propio I+D+i de la UPM para J\'ovenes Investigadores Doctores (VJIDOCUPM18MBL).The project that gave rise to these results received the support
of a fellowship from ``la Caixa'' Foundation (ID 100010434). The
fellowship code is LCF/BQ/PI19/11690016.}}%
\begin{document}

\maketitle
\thispagestyle{empty}
\pagestyle{empty}

\begin{abstract}
We consider the localization problem between agents while they run a formation control algorithm. These algorithms typically demand from the agents the information about their relative positions with respect to their neighbors. We assume that this information is not available. Therefore, the agents need to solve the observability problem of reconstructing their relative positions based on other measurements between them. We first model the relative kinematics between the agents as a left-invariant control system so that we can exploit its appealing properties to solve the observability problem. Then, as a particular application, we will focus on agents running a distance-based control algorithm where their relative positions are not accessible but the distances between them are. 
  \end{abstract}

\section{Introduction}
Robot swarms are envisioned to assist humans in challenging tasks such as search\&rescue, disaster response, and environmental monitoring \cite{yang2018grand}.  Formation control algorithms are powerful tools for the control of geometrical variables such as distances, positions or angles between networked robots \cite{oh2015survey}, and they have been found useful for the introduction of robot swarms in real-world scenarios \cite{AnYuFiHe08}. It is inevitable that cost-effective massive robot swarms will require hardware to be as low cost as possible. However, the implementation of many well-studied formation controllers demands robots to know about their relative position with respect to their neighbors, and this information is typically hard to obtain. For example, the relative positions can be estimated directly by employing onboard radars or cameras, which are usually computationally intensive \cite{tron2014distributed,nageli2014environment}. This fact motivates researchers to look for cheaper alternatives, although technically challenging. For example, another works \cite{Guo2018,van2018board} focus on ultra wideband radio chips that can measure distances between them with accuracy in large areas. These radio chips are relatively cheap and, most importantly, very light. Therefore, they are very appealing to be employed in micro aerial vehicles for assisting in relative localization tasks. In this paper, we propose to exploit left-invariant vector fields to assist in the observability problem of reconstructing relative positions from relative geometrical quantities such as inter-agent distances or angles.

Left-invariant vector fields are uniquely determined by only knowing their value at the identity element of the group, which is independent of any particular representation, so it provides a coordinate-free language. These fields enable us to translate the dynamics on the Lie group (a nonlinear space) to its corresponding Lie algebra (a linear vector space). Consequently, the corresponding dynamics for the multi-agent system in this paper will be intrinsic, i.e., they do not depend on the choice of coordinates and permits to avoid the inversion of Jacobians in the proposed framework. This inherent property is a necessity if one desires globally well-defined behavior. We will show how to tackle the non--linear observability problem when a multi-agent system is based on left-invariant vector fields. This approach encompasses other more particular techniques \cite{martinelli2005observability,zelazo2015bearing,williams2015Observability,schiano2018dynamic,colombo2018motion}. For example, we are not restricted to work in absolute coordinates but relative ones, and we are not bound to any specific observation function. In fact, we allow the analysis for the combination of several different sensors or observation functions at each agent. As a particular example, we will study the case where agents only measure inter-agent distances and their impact on distance-based controllers based on rigidity theory \cite{AnYuFiHe08}.

The rigidity theory has been employed before for the relative localization of static agents \cite{aspnes2006theory}. For non-static mobile agents whose position is unknown, we will show how different implementations of the estimation of the relative positions, e.g., with a Kalman filter, in combination with a distance-based formation controller can arise robustness issues in the multi-agent system. We will identify that achieving the desired formation shape does not guarantee a correct estimation of the relative positions. Furthermore, an incorrect shape with an undesired motion of the team of agents can be an attractive steady-state configuration as well. That creates a surprising connection with the recent findings on the robustness of undirected formations \cite{mou2016undirected,MarCaoJa15,li2018formation} so that the issue can be analyzed and potentially solved. Nevertheless, we will show an implementation technique for the correct estimation of the relative positions while agents run a distance-based formation algorithm.

This paper is organized as follows. We overview the relevant concepts on Lie groups and left-invariant control systems in Section \ref{sec: pre}. We connect the introduced concepts to the nonlinear observability problem in Section \ref{sec: obs}. We continue with formalizing the relative kinematics between neighboring agents in a network as a left-invariant control system in Section \ref{sec: left}. This framework will allow us to study in Section \ref{sec: loc} the reconstruction of relative positions from the observations of distances between neighboring agents. We analyze in Section \ref{sec: rig} some robustness issues on employing the proposed observer in combination with a distance-based formation controller. We illustrate our findings with some numerical experiments in Section \ref{sec: exp}.

\section{Preliminaries}
\label{sec: pre}
We will consider that all manifolds in this section are $C^{\infty}$, and we employ Einstein's summation convention, i.e., we drop the summation sum over repeated indices.
\subsection{Elements of differential geometry}
Let $Q$ be an $m$-dimensional manifold where its \emph{tangent bundle} $TQ$ is the collection of all the tangent vectors to $Q$ at each point.
The tangent bundle projection $\tau_{TQ}:TQ\Flder Q$ assigns to each tangent vector its base point.


Let $f: Q\Flder N$ be a smooth mapping between manifolds $Q$ and $N.$ We then write $Tf:TQ\Flder TN$ to denote the tangent map.  When $N=\R$ we shall denote the set of smooth real-valued functions on $Q$ by $C^{\infty}(Q).$
Consider the linear map $\mathbf{d}f(q):T_{q}Q\to\mathbb{R}$ called \textit{differential} of $f$ and $\mathbf{d}f(q)\in T^*_qQ$, the dual of the vector space $T_qQ$ . For $v_q\in T_{q}Q$, $\mathbf{d}f(q)\cdot v_q$ provides the directional derivative of $f$ and it is locally given by $\mathbf{d}f(q)\cdot v_q=\frac{\partial f}{\partial q^{i}}v_q^{i}$ where $(q^{i})$ are local coordinates on $Q$.

A \textit{vector field} $X$ on $Q$ is a smooth mapping $X:Q\Flder
TQ$ which assigns to each point $q\in Q$ a tangent vector $X(q)\in
T_{q}Q$ and it satisfies $\tau_{TQ}\circ X={\rm Id}_{Q}$, where ${\rm Id}_{Q}$ is the identity map on $Q$ and $\tau_{TQ}:TQ\to Q$ the canonical projection, $\tau_{TQ}(q,v_{q})=q$. The set of all vector fields
over $Q$ is denoted by $\mathfrak{X}(Q).$ An \textit{integral curve}
of a vector field $X$ is a curve satisfying $\dot{c}(t)=X(c(t)).$

Let $X\in \mathfrak{X}(Q)$ and $h\in C^{\infty}(Q)$. The Lie derivative of $h$ with respect to $X$ is the real-valued function $\mathcal{L}_{X}h:Q\to\mathbb{R}$ given by $\mathcal{L}_{X}h(q):=\textbf{d}h(q)\cdot X(q)=X^{i}\frac{\partial h}{\partial q^{i}},\,\forall q\in Q,$
using as local coordinates $(q^i)$ on $Q$.

If $h:Q\to\mathbb{R}^{p}$ is a  vector valued differentiable function, applying the above definition component-wise we have $\mathcal{L}_{X}h(q):=\Big{[}X^{i}\frac{\partial h_1}{\partial q^{i}},\ldots,X^{i}\frac{\partial h_p}{\partial q^{i}}\Big{]}^{T}=X(q)\frac{\partial h}{\partial q},$ in which case $\mathcal{L}_{X}h:Q\to\mathbb{R}^{p}$. 






\subsection{Lie group and left-invariant vector fields}

Next, we introduce the basics on Lie groups and left-invariant vector fields (see \cite{thesisleo} Chapter $2$ for more details).




Let $G$ be a finite dimensional Lie group. The tangent bundle at a point $g\in G$ is denoted as $T_{g}G$ . The tangent space at the identity $e\in G$ defines a Lie algebra and is denoted by $\mathfrak{g}:=T_{e}G$.
 Let $L_{g_1}:G\to G$ be the left translation of the element $g_1\in G$ given by $L_{g_1}(g_2)=g_1g_2$ for $g_2\in G$. $L_{g_1}$ is a diffeomorphism on $G$. Their tangent map  (i.e, the linearization or tangent lift of left translations) is denoted by $T_{g_2}L_{g_1}:T_{g_2}G\to T_{g_1g_2}G$. 




\begin{definition}\label{defLI} $X\in\mathfrak{X}(G)$ is called \textit{left-invariant} if $T_{g_2}L_{g_1}(X(g_2))=X(L_{g_1}(g_2))=X(g_1g_2)$ $\forall\,g_1,g_2\in G$.\end{definition}

In particular for $g_1=g$ and $g_2=e$, Definition \ref{defLI} means that a vector field $X$ is left-invariant if $\dot{g}=X(g)=T_{e}L_{g}\xi$ for $\xi=X(e)\in\mathfrak{g}$. As $X$ is left-invariant, $\xi=X(e)=T_{g}L_{g^{-1}}\dot{g}$. The tangent map $T_{e}L_g$ shifts vectors based at $e$ to vectors based at $g\in G$. By doing this operation for every $g\in G$ we define a vector field as $\displaystyle{\lvec{\xi}}(g):=T_{e}L_g(\xi)$ for $\xi:=X(e)\in T_{e}G$. Note that the vector field $\lvec{\xi}(g_1)$ is left-invariant, because $\lvec{\xi}(g_2g_1)=T_e(L_{g_2}\circ
L_{g_1})\xi=T_{g_1}L_{g_2}\lvec{\xi}(g_1).$

From now on, the left arrow above a vector field will denote it is left-invariant. The set of left-invariant vector fields on $G$ is isomorphic to $T_eG$ as vector spaces. Thus, the left-invariant vector fields on $G$ are uniquely determined by knowing their value at the identity element as we shown above. Moreover, the set of left-invariant vector fields is a Lie algebra, i.e., the Lie bracket of left-invariant vector fields is left-invariant: $[\lvec{\xi},\lvec{\eta}]=\lvec{[\xi,\eta]}$. 

For all $\xi\in\al$, let $\gamma_{\xi}:\R\Flder G$ denote the integral curve of the left-invariant vector field $\lvec{\xi}$ induced by $\xi$, which is defined uniquely by claiming
\[
\lvec{\xi}(e)=\xi,\,\,\,\,\gamma_{\xi}(0)=e,\,\,\,\,\gamma_{\xi}^{\prime}(t)=\lvec{\xi}(\gamma_{\xi}(t))\,\,\,\mbox{for all }\,t\in\R.
\]
\begin{definition}\label{exponencial}
The map $\e:\al\Flder G,\,\,\,\,\e(\xi)=\gamma_{\xi}$ is called the \emph{exponential map}.
\end{definition}

By using the exponential map, a left-invariant vector field $\lvec{\xi}$ induced by $\xi$ may be constructed as
\begin{equation}
\lvec{\xi}(g)=\left.\dfrac{\rm d}{{\rm d}t}\right|_{t=0}(g\, {\rm exp}(t\xi)).
	\label{eq: eb}
\end{equation}

\subsection{Left-invariant control systems}
Let $G$ be an $m$-dimensional Lie group.
\begin{definition}[see \cite{jurdjevic}]  A control system $\dot{g}=f(g,u)$, defined on $G$,  where $u\in\mathfrak{g}$ is said to be \textit{left-invariant} if $T_{g_2}L_{g_1}f(g_2,u)=f(L_{g_1}(g_2),u)$ for each $g_1,g_2\in G$.\end{definition}

For a left-invariant fully actuated control system such that the Lie algebra $\mathfrak{g} $ is spanned by $\{E_{1},\hdots,E_{m}\}$, the controls are specified for the generators of the Lie algebra as $\displaystyle{T_gL_{g^{-1}}\dot{g}=u=\sum_{i=1}^{m}u^{i}(t)E_{i}}$,

\begin{definition}[see \cite{jurdjevic}]  A \textit{left-invariant control system} on a Lie group $G$ is given by $\dot{g} = T_{e}L_{g}(u)$.\end{definition}
Each left-invariant control system is defined by its values at the identity $e$ of the group, since $T_eL_gf(e,u)=f(g,u)$. 
%
%

\begin{remark}
In most of the applications controls are used to externally influence the angular/linear velocity of rigid bodies, so it is helpful to keep in mind that $u$ may play the role of angular velocity for a rigid
 body on the body frame.
\end{remark}

\section{Nonlinear observability for left-invariant control systems}
\label{sec: obs}
Consider the following left-invariant control system
\begin{equation}
\label{acs}
\dot g = gu = g\sum_{j=1}^{m}u^jE_j,\quad y=h(g),
\end{equation} 
where $h:G\to\mathbb{R}^{p}$ is an output map.

Let $g_1$, $g_2$ be two points in an open set $V$ of $G$. They are $V$-indistinguishable 
if for every admissible constant control $u:[0,T]\to U$ the corresponding integral curves starting from $g_1$ and $g_2$ remain in $V$ for all $t\leq T$ and the output function are the same for both trajectories for all $t\leq T$.

\begin{definition}
The control system \eqref{acs} is said to be locally observable if every state $g$ can be distinguished from its neighbors using system trajectories remaining close to $g$. 
\end{definition}

\begin{definition}
	The observation space $\mathcal{O}$ for the control system \eqref{acs} is the space of functions on $G$ containing $h_1,\ldots,h_p$ and all iterated Lie derivatives of left-invariant vector fields $\mathcal{L}_{\lvec{X_1}}\mathcal{L}_{\lvec{X_2}}\ldots \mathcal{L}_{\lvec{X_s}}h_j,$ $j\in \{1,\dots,p\}$, $s\in\mathbb{N}$ where $\lvec{X_s}$ are the left-invariant vector fields associated with the element $E_s$ of the basis of $\mathfrak{g}$, e.g., see equation (\ref{eq: eb}).
\end{definition}

Roughly speaking, $\mathcal{O}$ contains all the output functions and all derivatives of the output functions along trajectories of the system. Given the observation space $\mathcal{O}$ and $g\in G$, the observability is determined by studying the dimension of the following space which represents the space of feasible observations: $\hbox{d}\mathcal{O}(q)=\hbox{span}\{\hbox{d}\alpha(g)|\alpha\in\mathcal{O}\}.$

\begin{theorem}
The left-invariant control system \eqref{acs} is locally observable at $g\in G$ if $\dim\{\hbox{d}\mathcal{O}(g)\}=\dim\{G\} $.
\label{th: obs}
\end{theorem}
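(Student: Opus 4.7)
The plan is to prove a version of the Hermann--Krener rank condition adapted to the left-invariant setting, by showing that if the differentials of elements of $\mathcal{O}$ span $T^{*}_{g}G$, then $g$ can be separated from every nearby point by a family of smooth functions that are constant on indistinguishable orbits. First, I would reduce the notion of $V$-indistinguishability to piecewise-constant admissible controls, which is standard since such controls form a dense subset of admissible inputs and the output map is continuous in the control. For a constant control $u=\sum_{j}u^{j}E_{j}\in\mathfrak{g}$, left-invariance of the control system together with Definition \ref{exponencial} and equation (\ref{eq: eb}) implies that the trajectory starting at $g_{0}$ is $g(t)=g_{0}\,\mathrm{exp}(tu)$, since $\dot{g}=T_{e}L_{g}u=\lvec{u}(g)$ and $\lvec{u}$ has $\mathrm{exp}(tu)$ as its integral curve through $e$.

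Next, for a piecewise-constant control taking values $u_{1},\dots,u_{s}\in\mathfrak{g}$ on consecutive intervals of lengths $t_{1},\dots,t_{s}$, the trajectory starting at $g_{0}$ is the ordered product
\begin{equation*}
g(t_{1},\dots,t_{s};g_{0}) \;=\; g_{0}\,\mathrm{exp}(t_{1}u_{1})\cdots\mathrm{exp}(t_{s}u_{s}),
\end{equation*}
so that the associated output reads $\Phi(t_{1},\dots,t_{s};g_{0}) = h\bigl(g_{0}\,\mathrm{exp}(t_{1}u_{1})\cdots\mathrm{exp}(t_{s}u_{s})\bigr)$. If $g_{1}$ and $g_{2}$ are $V$-indistinguishable, then $\Phi(t_{1},\dots,t_{s};g_{1})=\Phi(t_{1},\dots,t_{s};g_{2})$ for all sufficiently small nonnegative $t_{i}$. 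I would then compute the mixed partial $\partial^{s}/\partial t_{1}\cdots\partial t_{s}$ at $t_{1}=\cdots=t_{s}=0$; by the chain rule and the definition of the Lie derivative, differentiating in $t_{s}$ first picks up $\mathcal{L}_{\lvec{u_{s}}}h$ evaluated along the remaining product, and iterating yields
\begin{equation*}
\mathcal{L}_{\lvec{u_{1}}}\mathcal{L}_{\lvec{u_{2}}}\cdots\mathcal{L}_{\lvec{u_{s}}}h(g_{1}) \;=\; \mathcal{L}_{\lvec{u_{1}}}\mathcal{L}_{\lvec{u_{2}}}\cdots\mathcal{L}_{\lvec{u_{s}}}h(g_{2}).
\end{equation*}
Since the $u_{i}$ are arbitrary elements of $\mathfrak{g}$ and each $\lvec{u_{i}}=\sum_{k}u_{i}^{k}\lvec{E_{k}}$ is linear in the basis vector fields, multilinearity transfers this identity to all iterated Lie derivatives of $h_{1},\dots,h_{p}$ along $\lvec{E_{1}},\dots,\lvec{E_{m}}$. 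Hence every $\alpha\in\mathcal{O}$ satisfies $\alpha(g_{1})=\alpha(g_{2})$.

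The final step is to invoke the rank hypothesis: if $\dim\{\mathrm{d}\mathcal{O}(g)\}=\dim G$, one can select $m=\dim G$ functions $\alpha_{1},\dots,\alpha_{m}\in\mathcal{O}$ such that $\mathrm{d}\alpha_{1}(g),\dots,\mathrm{d}\alpha_{m}(g)$ are linearly independent in $T_{g}^{*}G$. The map $\Psi:g'\mapsto(\alpha_{1}(g'),\dots,\alpha_{m}(g'))$ is then a local diffeomorphism at $g$ by the inverse function theorem, so there is a neighborhood $V$ of $g$ on which $\Psi$ is injective. By the previous paragraph, any $g'\in V$ indistinguishable from $g$ would satisfy $\Psi(g')=\Psi(g)$ and hence $g'=g$, proving local observability. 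The main delicate point in this proof is the iterated-derivative computation that identifies mixed partials of $\Phi$ with iterated Lie derivatives along left-invariant vector fields; making that identification rigorous for arbitrary length $s$ requires a careful induction exploiting that $\mathrm{exp}(t_{i}u_{i})$ acts by right multiplication on the preceding product, which is precisely what aligns the chain rule with the left-invariant Lie derivative operators $\mathcal{L}_{\lvec{u_{i}}}$.
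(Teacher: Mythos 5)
Your proposal is correct, and it is a genuine proof rather than the paper's one-line deferral: the paper simply states that the result ``follows the same lines as'' the observability rank condition for control-affine systems in the cited reference (Theorem 3.32 of Nijmeijer--van der Schaft), whereas you reconstruct that Hermann--Krener argument from scratch and, usefully, specialize it to the left-invariant setting. The specialization is what makes your write-up clean: because the flow of $\lvec{u}$ is right translation by $\exp(tu)$, the response to a piecewise-constant control is the explicit ordered product $g_{0}\exp(t_{1}u_{1})\cdots\exp(t_{s}u_{s})$, so the identification of the mixed partials $\partial^{s}\Phi/\partial t_{1}\cdots\partial t_{s}$ at the origin with the iterated Lie derivatives $\mathcal{L}_{\lvec{u_{1}}}\cdots\mathcal{L}_{\lvec{u_{s}}}h$ is an elementary chain-rule induction rather than an appeal to general flow-box arguments; the concluding step (select $\dim G$ functions of $\mathcal{O}$ with independent differentials, apply the inverse function theorem, and use that indistinguishable points agree on all of $\mathcal{O}$) is exactly the standard closing move. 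Two small points worth making explicit if you polish this: (i) the multilinearity step that passes from arbitrary $u_{i}$ to the basis elements $E_{k}$ requires the admissible control set $U$ to span $\mathfrak{g}$ (or contain the $E_{k}$), which holds here because the system is fully actuated but is used tacitly; and (ii) since the $t_{i}$ range over small nonnegative values only, you should note that one-sided derivatives at $0^{+}$ of these smooth functions coincide with the derivatives of their smooth extensions, so the mixed-partial computation is legitimate. Neither is a gap, just a place to be explicit.
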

\begin{proof}
It follows the same lines as  the proof in \cite{van} (Theorem 3.32) for control affine systems.
\end{proof}

\section{Left-invariant multi-agent systems}
\label{sec: left}

\subsection{Agents in a network}
Consider a set $\mathcal{N}$ of $o\in\mathbb{N}\geq2$ agents whose position in the plane is denoted by $r_i\in\mathbb{R}^2, i\in\{1,\dots,o\}$ with respect to a fixed global frame, and define $r=(r_1,\dots,r_o)\in\mathbb{R}^{2o}$ as the stacked vector of agents' positions.
An agent $i\in\mathcal{N}$ can take measurements with respect to other agents in the subset $\mathcal{N}_i \subseteq \mathcal{N}$, i.e., the neighbors of agent $i\in\mathcal{N}$. The neighbor relationships are described by an undirected graph $\mathbb{G} = (\mathcal{N}, \mathcal{E})$ with the edge set $\mathcal{E}\subseteq\mathcal{N}\times\mathcal{N}$. The set $\mathcal{N}_i$ is defined by $\mathcal{N}_i\dfb\{j\in\mathcal{N}:(i,j)\in\mathcal{E}\}$. 
We define the elements of the incidence matrix $B\in\mathbb{R}^{o\times|\mathcal{E}|}$ that establish the neighbors' relationships for $\mathbb{G}$ by
\begin{equation}
	b_{ik} \dfb \begin{cases}+1 \quad \text{if} \quad i = {\mathcal{E}_k^{\text{tail}}} \\
		-1 \quad \text{if} \quad i = {\mathcal{E}_k^{\text{head}}} \\
		0 \quad \text{otherwise}
	\end{cases},
	\label{eq: B}
\end{equation}
where $\mathcal{E}_k^{\text{tail}}$ and $\mathcal{E}_k^{\text{head}}$ denote the tail and head nodes, respectively, of the edge $\mathcal{E}_k$, i.e., $\mathcal{E}_k = (\mathcal{E}_k^{\text{tail}},\mathcal{E}_k^{\text{head}})$.

The stacked vector of relative positions between neighboring agents is then given by
\begin{equation}
	z = \left(\begin{matrix} \overline B^T \\ -\overline B^T \end{matrix}\right) r,
	\label{eq: z}
\end{equation}
where $\overline B := B \otimes I_2$ with $I_2$ being the $2\times 2$ identity matrix, and $\otimes$ the Kronecker product. Note that $z_k \in \mathbb{R}^2$ and $z_{k+|\mathcal{E}|}\in\mathbb{R}^2$ in $z$ correspond to $r_i - r_j$ and $r_j - r_i$ for the edge $\mathcal{E}_k$ respectively. We can also define $r_{ij} \dfb r_i - r_j$ with respect to a global frame to reduce verbosity.

\subsection{Relative kinematics in the network}
In order to estimate the relative position between neighboring agents, we will focus on the relative motions, or velocities, of the neighbors of agent $i$ with respect to agent $i$. Denote by $SO(2n)$ the orthogonal group of dimension 
$2n$. The mathematical construction needed is related to the Lie group
$SE(2n)=\{(p,R): p\in \mathbb{R}^{2n}, R\in SO(2n)\}\;,$
with $n\in\mathbb{N}\geq 1$. An element of $SE(2n)$ is usually represented in matrician form for operational purposes as 
${\bf q}=\left(
\begin{array}{cc}
R&p\\
0&1
\end{array}
\right)$, where the multiplication on $SE(2n)$ and inverse are
\[
{\bf q}\cdot {\bf q}'=\left(
\begin{array}{cc}
R&p\\
0&1
\end{array}
\right)
\left(
\begin{array}{cc}
R'&p'\\
0&1
\end{array}
\right)
=
\left(
\begin{array}{cc}
RR'&Rp'+p\\
0&1
\end{array}
\right)\, ,
\]
\[
{\bf q}^{-1}=\left(
\begin{array}{cc}
R^T&-R^Tp\\
0&1
\end{array}
\right).
\]
As we will employ distance-based formation controllers in Section \ref{sec: rig}, the agent $i$ will work in its own local frame of coordinates to measure the distances to all the $n_i = |\mathcal{N}_i|$ neighbors. Therefore, for every agent $i$ we will have interest in the following Lie subgroup of $SE(2n_i)$
\[
G_i=\{(p, R_i) \; |\; p\in {\mathbb R}^{2n_i} ,\ R_i\in SO(2) \},
\]
where $p=(p_1,\dots,p_{n_i}) $ is the stacked vector such that $p_k$ represents the relative position of the neighbor $k$ in $\mathcal{N}_i$ with respect to $i$ and $R_i$ is the rotational matrix representing the orientation of agent $i$ with respect to a global frame of coordinates. The Lie subgroup $G_i$ is embedded in $SE(2n_i)$ as
$
(p, R_i)\hookrightarrow (p,R)= (p, \underbrace{R_i,\ldots,R_i}_{\text{$n_i$-times}}),
$
or in matricial form
\[
\left(
\begin{array}{cc}
R&p\\
0&1
\end{array}
\right)=
\left(
\begin{array}{ccccc}
R_i\otimes I_{n_i} &p\\
0&1
\end{array}
\right).
\] 
The multiplication in the Lie subgroup $G_i$ is defined by 
\[
	(p, R_i)(p', R'_i)=\left(\left(R_ip'_k+p_k\right)_{1\leq k\leq n_i}, R_iR_i'\right).
\]
where ${p}_k \in\mathbb{R}^2$ is the $k$'th vector in the stacked vector $p$. Once we center the system at the agent $i$, a point ${\mathbf q}\in G_i$ carries the information of the relative position of all the neighbors ${\mathcal N}_i$ with respect to the agent $i$ and the orientation of the agent $i$. The point  ${\mathbf q}$ can be represented using (homogeneous) coordinates $(p,\theta_i)=(x_1, y_1, \ldots, x_{n_i}, y_{n_i}, \theta_i)$ where
$p$ is the stacked vector of the relative positions $r_{ij}$ such that $j\in\mathcal{N}_i$ and
$
R_i(\theta_i)=
\left(
\begin{array}{cc}
\cos\theta_i&-\sin \theta_i\\
\sin \theta_i&\cos\theta_i	
\end{array}
\right)
$
denotes the local frame of coordinates of agent $i$ with respect to a fixed global frame. Denote by $\mathfrak{g}_i$ the Lie subalgebra of  $G_i$. The Lie subalgebra $\mathfrak{g}_i$ of the Lie algebra $\mathfrak{se}(2n_i)$, as a vector space, is given by 
$\mathfrak{g}_i=\{\xi=(v, \omega_i)\; |\; v\in {\mathbb R}^{2n_i}, \omega_i\in \mathfrak{so}(2)\}
$,
where 
$\omega_i=\left(
\begin{array}{cc}
0&-w_i\\
w_i&0	
\end{array}
\right)$ with $w_i\in\mathbb{R}$ denoting the angular velocity of agent $i$, and $v$ is the stacked vector of relative velocities between agent $i$ and the neighbors ${\mathcal N}_i$. Alternatively we can embed the Lie algebra
$\mathfrak{g}_i=\{\xi=(v, w_i)\; |\; v\in {\mathbb R}^{2n_i},  w_i\in {\mathbb R}\}
$ into $\mathfrak{se}(2n_i)$ by taking
$\xi=
\left(
\begin{array}{cc}
\omega_i\otimes I_{2n_i}&v \\
0&0
\end{array}
\right).
$

Let $\{e_{l}\}_{1\leq l\leq 2n_i}$ be the canonical basis of ${\mathbb R}^{2n_i}$,  the Lie subalgebra $\mathfrak{g}_i$ has  $(2n_i+1)$-generators $\{E^a_{k}, E_{n_i+1}\}$,  $1\leq a\leq 2$ and $1\leq k\leq n_i$, with $E^a_{k}=( e_{a+k}, 0)$ and $E_{n_i+1}=(0, \ldots, 0,  1)$. The Lie bracket of the generators satisfies
$
[E^a_{k}, E^b_{j}]=0, \quad [E^1_{k}, E_{n_i+1}]=E^2_{k}, \quad [E^2_{k}, E_{n_i+1}]=-E^1_{k}\;$, and the exponential map on $G_i$, in coordinates, is given by
\begin{eqnarray*}
\hbox{exp}(v, \omega_i)=\left(
\begin{tiny}\begin{array}{ccccc}
	R_i(\omega_i)&0&\cdots & 0 &R_i(\omega_i)\frac{v^\perp_{1}}{\omega_i}-\frac{v^\perp_{1}}{\omega_i}\\
	0&R_i(\omega_i)&\cdots &0&R_i(\omega_i)\frac{v^\perp_{2}}{\omega_i}-\frac{v^\perp_{2}}{\omega_i}\\
	\cdots& \cdots   &\cdots &\cdots &\cdots\\
	0       & 0      &\cdots      & R_i(\omega_i)     & R_i(\omega_i)\frac{v^\perp_{n_i}}{\omega_i}-\frac{v^\perp_{n_i}}{\omega_i}\\
	0       & 0       &\cdots     & 0        & 1
\end{array}\end{tiny}
\right)
\end{eqnarray*}
$\omega_i\not=0$, where
$
R_i(\omega_i)=\left(
\begin{array}{cc}
\cos\omega_i&-\sin \omega_i\\
\sin \omega_i&\cos\omega_i	
\end{array}
\right)
$ and if $v=(v_x, v_y)\in {\mathbb R}^2$ then $v^\perp=(v_y, -v_x)$.
Moreover, $\hbox{exp}(v, 0)=(v, I_2)$ and
the left-invariant invariant vector fields corresponding to the basis  $\{E^a_{k}, E_{n_i+1}\}$  in coordinates $(x_1, y_1, \ldots, x_{n_i}, y_{n_i}, \theta_i)$  are
\begin{align*}
		\lvec{{E}_{n_i+1}}&=\frac{\partial}{\partial \theta_i},\quad \lvec{{E}^1_{k}}=\cos\theta_i\frac{\partial}{\partial x_k}+\sin\theta_i\frac{\partial}{\partial y_k} \\
	\lvec{{E}^2_{k}}&=-\sin\theta_i\frac{\partial}{\partial x_k}+\cos\theta_i\frac{\partial}{\partial y_k}, \quad k\in\{1,\dots,n_i\}
\end{align*}
For a curve ${\mathbf q}: t\rightarrow G_i$ we have that
\[
{\mathbf q}^{-1}\dot{\mathbf q}=
\left(
\begin{array}{ccccc}
	R_i^{-1}\dot{R}_i&0&\cdots & 0 &-R_i^T\dot{p }_1\\
	0&R_i^{-1}\dot{R}_i&\cdots &0&-R_i^T\dot{p}_2\\
\cdots& \cdots   &\cdots &\cdots &\cdots\\
0       & 0      &\cdots      & R_i^{-1}\dot{R}_i  & -R_i^T\dot{p}_{n_i}\\
0       & 0       &\cdots     & 0        & 0
\end{array}
\right),
\] implying that ${\mathbf q}\in G_i$ satisfies $\dot{\mathbf q}={\mathbf q}\, \xi,$ so we can write
\begin{equation}\label{systemlics}
	\begin{cases}
\dot{p}_{j}&= R_i\,v_{j}\\
\dot{R}_i&= R_i\,\omega_i,
\end{cases}
\end{equation}
	which is a fully actuated system on the Lie subalgebra $\mathfrak{g}_i$ of $\mathfrak{se}(2n_i)$ with control inputs $(v_j, w_i)$,  where $v_{j}$ is the relative velocity between the agent $i$ and the neighbor agents $j$. Therefore, in terms of the basis of $\mathfrak{g}_i$ we write can write the dynamics of $\mathbf{q}$ in the following compact form
	\begin{equation}
		\dot{\mathbf q}= w_i\lvec{{E}_{n_i+1}}+\sum_{k=1}^{n_i}\left( v_{x,k} \lvec{{E}^1_{k}}+v_{y.k} \lvec{{E}^2_{k}}\right). \label{eq: dynq}
\end{equation}

\section{Relative localization of agents with distance measurements}
\label{sec: loc}
We consider that agents have installed on board sensors that enable agent $i$ to measure its distance from agent $k\in {\mathcal N}_i$ and its own orientation $\theta_i$ with respect to a global frame of coordinates. From now on we focus on the agent $i$, for $k=1,\dots,n_i$ we define the observation functions on $G_i$
	\begin{equation}
	 h_{k}({\mathbf q})=\frac{1}{2}\left(x_k^2+y_k^2\right), \quad h_{n_i+1}({\mathbf q})=\theta_i.\label{eq: obs}
	\end{equation}
\begin{theorem}
	\label{th: main}
	Assume that all the neigbors of agent $i$  are in relative motion with respect to agent $i$ in both coordinates $x$ and $y$. Then the state $\mathbf{q}$ in $G_i$  is observable under the dynamics (\ref{eq: dynq}) with the observation functions $h_k$ and $h_{n_i+1}$ for $k=1,\dots,n_i$.
\end{theorem}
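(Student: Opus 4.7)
The plan is to invoke Theorem \ref{th: obs}: it suffices to exhibit $2n_i + 1 = \dim G_i$ functions in the observation space $\mathcal{O}$ whose differentials are linearly independent at every $\mathbf{q} \in G_i$. The observation functions alone do not suffice, since $dh_k = x_k\, dx_k + y_k\, dy_k$ is rank-one in the $(dx_k, dy_k)$ directions (and vanishes when the neighbor coincides with the agent). Hence the $n_i + 1$ covectors $\{dh_1, \ldots, dh_{n_i}, dh_{n_i+1}\}$ are insufficient, and I would enlist first-order Lie derivatives along the left-invariant vector fields $\lvec{E^a_k}$ to close the gap.

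For each $k \in \{1, \ldots, n_i\}$, a direct calculation using the coordinate expressions of $\lvec{E^1_k}$ and $\lvec{E^2_k}$ recorded in the previous section yields
\[
\mathcal{L}_{\lvec{E^1_k}} h_k = x_k \cos\theta_i + y_k \sin\theta_i, \qquad \mathcal{L}_{\lvec{E^2_k}} h_k = -x_k \sin\theta_i + y_k \cos\theta_i,
\]
both of which belong to $\mathcal{O}$ by definition. Their differentials contribute the covectors $(\cos\theta_i, \sin\theta_i)$ and $(-\sin\theta_i, \cos\theta_i)$ in the $(dx_k, dy_k)$-block, together with additional contributions along $d\theta_i$.

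The central step is to arrange the differentials of $h_{n_i+1}$ together with $\mathcal{L}_{\lvec{E^1_k}} h_k$ and $\mathcal{L}_{\lvec{E^2_k}} h_k$ (for $k = 1, \ldots, n_i$) into a $(2n_i + 1) \times (2n_i + 1)$ matrix written in the coordinate basis $(dx_1, dy_1, \ldots, dx_{n_i}, dy_{n_i}, d\theta_i)$. Since $\lvec{E^a_k}$ acts non-trivially only on $(x_k, y_k)$, the leading $2n_i \times 2n_i$ block is block-diagonal, with each $2 \times 2$ block equal to $R_i(\theta_i)^T$; the last column collects the $d\theta_i$-coefficients produced by differentiating $\cos\theta_i$ and $\sin\theta_i$, and the last row (from $dh_{n_i+1}$) is $(0, \ldots, 0, 1)$. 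Expanding the determinant along this last row reduces the computation to $\prod_{k=1}^{n_i} \det R_i(\theta_i) = 1$. Hence the chosen differentials are linearly independent at every $\mathbf{q} \in G_i$, giving $\dim d\mathcal{O}(\mathbf{q}) = 2n_i + 1 = \dim G_i$, and Theorem \ref{th: obs} delivers local observability.

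I do not anticipate a serious technical obstacle, since the rank calculation is explicit and uniform in $\mathbf{q}$. The hypothesis that all neighbors are in relative motion in both coordinates does not enter the rank condition directly; I would read it as the physical/genericity condition under which the first-order Lie derivatives $\mathcal{L}_{\lvec{E^a_k}} h_k$ correspond to genuinely informative time derivatives of the recorded distance measurements along an actual system trajectory.
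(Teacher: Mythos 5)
Your proposal is correct and takes essentially the same route as the paper: both compute the first-order Lie derivatives $\mathcal{L}_{\lvec{E^{1}_{k}}}h_k$ and $\mathcal{L}_{\lvec{E^{2}_{k}}}h_k$, observe that together with $dh_{n_i+1}=d\theta_i$ their differentials span all of $T^*_{\mathbf q}G_i$ via the invertible rotation block $R_i(\theta_i)^T$, and conclude by the rank condition of Theorem \ref{th: obs}. Your reading of the relative-motion hypothesis as the condition making these Lie derivatives correspond to genuinely excited directions along actual trajectories matches the paper's own closing remark in its proof.
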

\begin{proof}
To construct the observability matrix we first compute:
\begin{align*}
\quad dh_{k}&= x_k\, dx_k+ y_k\, dy_k, \quad dh_{n_i+1}= d\theta_i,
\\
	d (\lvec{{E}^1_{j}} h_{k})&=\delta_{jk}\left(\cos\theta_i\,  dx_k  + \sin\theta_i\, dy_k\right.\\& \left.-(x_k\sin \theta_i-y_k\cos\theta_i)\, d\theta_i\right), \\
 d (\lvec{{E}^2_{j}} h_{k})&=\delta_{jk}\left(-\sin\theta_i\,  dx_k  + \cos\theta_i\, dy_k\right.\\&\left.-(x_k\cos \theta_i+y_k\sin\theta_i)\, d\theta_i\right)\\
d (\lvec{{E}^1_{j}} h_{n_i+1})&=0, \quad 
d (\lvec{{E}^2_{j}} h_{n_i+1})=0,\\
d (\lvec{{E}_{n_i+1}} h_{k})&=0,\quad
d (\lvec{{E}^1_{n_i+1}} h_{n_i+1})=0,
	\end{align*}
	where $\delta_{jk}$ is the Kronecker delta and  $j,\, k \in\{1,\dots,n_i\}$. We are ready now to calculate the rank of 
\begin{align*}
d\mathcal O(\mathbf{q})&=\{ dh_{k},  dh_{n_i+1}, d (\lvec{{E}^1_{j}} h_{k}), d (\lvec{{E}^2_{j}} h_{k}), d (\lvec{{E}^1_{j}} h_{n_i+1}),\\& d(\lvec{{E}^2_{j}} h_{n_i+1}), d(\lvec{{E}_{n_i+1}} h_{k}), d (\lvec{{E}^1_{n_i+1}} h_{n_i+1}), \ldots\}_{\mathbf{q}}, \nonumber
\end{align*}
in particular, it is easy to check that
\begin{small}\begin{align}
	&\operatorname{rank }d\mathcal O(\mathbf{q}) =  \nonumber \\
	&=\operatorname{rank}\{ d\theta_i,\cos\theta_i\,  dx_k  + \sin\theta_i\, dy_k, -\sin\theta_i\,  dx_k  + \cos\theta_i\, dy_k \} \nonumber \\&=2n_i+1 = \operatorname{dim}\{G_i\}, \quad k\in\{1,\dots,n_i\},
	\label{eq: dom}
\end{align}\end{small}
	Then, the system is observable at each point by Theorem \ref{th: obs}. In particular, the observation of $\theta_i$ is trivial because $h_{n_i+1}(\mathbf{q}) = \theta_i$. However, the observation of $p_{j}$ is possible because it is assumed that $v_{j}$ is not zero in any of the components of $\mathbb{R}^2$. Note that according to  (\ref{systemlics}), when the relative velocity $v_{j}$ between the agent $i$ and the neighbors is zero in one of the components, the pair of neighboring agents makes a parallel translational motion along that component. 
\end{proof}
\begin{remark}
Note that for a rigid rotational motion of two neighboring agents with respect to a fixed point we will have a constant $h_{k}$. However, the corresponding relative position is observable since its associated $v_{j}$ is not zero. This fact will play an important role in distance-based formation control where rotations of the desired shape are allowed. This is not the case when other approaches are taken, for example, in position-based formation control \cite{oh2015survey} where the agents are controlling orientations as well.
\end{remark}
\begin{remark}
The analysis of the rank of $d\mathcal O$ in Theorem \ref{th: main} does not present difficulties because the observation functions $h_{k}$ allow us to look at each pair of neighbors separately. The presented mathematical framework enables us to study such cases even in 3D. However, this would not be the case for different observation functions involving more than one pair of neighbors like the ones suggested in the recent work on \emph{weak rigidity} \cite{park2017rigidity}. 
\end{remark}
\begin{remark}
Note that the fact that the agent $i$ can be self-rotating and this will not interfere with the estimation of its relative positions with respect to its neighbors.
\end{remark}

\section{Distance-based formation control with relative positions estimated from distance measurements}
\label{sec: rig}
Formation control algorithms provide tools to solve the task of forming a particular geometrical shape by a team of agents. In particular, rigidity theory \cite{AnYuFiHe08} allows the description of such shapes by setting desired inter-agent distances. A popular algorithm for each agent $i$ based on the gradient descent technique for minimizing distance errors between agents \cite{MaJaCa15} is given by
\begin{equation}
	\dot r_i = -\sum_{j\in\mathcal{N}_i}r_{ij}e_{ij},
	\label{eq: rdis}
\end{equation}
where $e_{ij} := ||r_{ij}||^2 - d_{ij}^2$ with $d_{ij}\in\mathbb{R}^+$ is the desired distance between agents $i$ and $j$. Let us split the vector (\ref{eq: z}) as $z = (z_1,z_2)^T$, which obviously satisfies $z_1 = -z_2$. Then the control action (\ref{eq: rdis}) can be written in compact form as
\begin{equation}
	\dot r = -R(z_1)^Te,
	\label{eq: rcom}
\end{equation}
where $r\in\mathbb{R}^{2|\mathcal{N}|}$ and $e\in\mathbb{R}^\mathbb{|\mathcal{E}|}$ are the stacked vectors of agents' positions and error distances respectively, and $R$ is the rigidity matrix \cite{AnYuFiHe08}. In particular $R:=\operatorname{diag}\{z_1\}^T(B\otimes I_2)^T$, where the operator $\operatorname{diag}$ place the stacked vectors (not the scalar elements) in $z_1$ in a block diagonal matrix, and we recall that $B$ is as in (\ref{eq: B}). A theoretical analysis of (\ref{eq: rcom}) reveals that desired shapes are locally exponentially stable \cite{KrBrFr08}. However, such a convergent result considers the common theoretical assumption where $z_1 = -z_2$. While it is true that $r_{ij} = -r_{ji}$, robots measure or estimate their corresponding relative positions on board. Therefore, it is realistic to assume that the estimations by agents $i$ and $j$ on the relative position between them are different, i.e., $\hat r_{ij} \neq - \hat r_{ji}$ where we denote by $\hat{\cdot}$ the estimation of the variable. Consequently, if robots implement the algorithm (\ref{eq: rdis}) but with estimated quantities, then
\begin{equation}
	\dot r_i = -\sum_{j\in\mathcal{N}_i}\hat r_{ij}e_{ij},
	\label{eq: rdis2}
\end{equation}
cannot be rewriten as (\ref{eq: rcom}) since $\hat r_{ij} \neq - \hat r_{ji}$. In other words, the stability of the formation based on the properties of the rigidity matrix might be lost. Without loss of generality, let us analyze three practical robustness issues through an example.
\begin{example}
	Consider a team of three agents, whose incidence matrix $B$ describes a complete graph, and their desired distances are all equal to $d$. We assume that each agent only measures distances with respect to its neighbors, and they estimate their relative positions in a Kalman filter with model dynamics (\ref{systemlics}), or alternatively (\ref{eq: dynq}), and observations as in (\ref{eq: obs}). Note that for implementing (\ref{systemlics}) agents need to communicate their velocities with their neighbors, and we consider that all the agents have the same orientation $\theta$. Because agents are measuring inter-agent distances, then they can measure directly their error distances. The implementation of (\ref{eq: rdis2}) at each agent is then given by
\begin{equation}
	\begin{cases}
		\dot r_1 &= -\hat r_{12}e_{12} - \hat r_{13}e_{13} \\
		\dot r_2 &= -\hat r_{21}e_{21} - \hat r_{23}e_{23} \\
		\dot r_3 &= -\hat r_{31}e_{31} - \hat r_{32}e_{32}.
	\end{cases}
\end{equation}
\end{example}

\emph{Robustness Issue 1:} Consider the situation when $\hat r_{12}$ and $\hat r_{13}$ are parallel. Therefore, there exists $e_{12}$ and $e_{13}$ where $\dot r_1 = 0$. If we consider similar cases for all the agents, then there exist values of the estimations where all the agents get stuck, and according to Theorem \ref{th: main} the estimations will not be updated. Therefore, the system reaches an undesired equilibrium with incorrect estimations $\hat r_{ij}$ and an undesired steady-state shape.

Note that the theoretical analysis in \cite{mou2016undirected} guarantees that for initial conditions where the relative positions of the three agents are not parallel, then they will remain non-parallel under dynamics (\ref{eq: rdis}). However, when an estimator like the proposed Kalman filter is running, we can find infinite initial conditions for the estimator such as the Robustness Issue 1 holds. In fact, we have confirmed numerically that the agents can converge to such a configuration even when they start with non-parallel estimations on the relative positions.

One could argue that $e_{ij} \approx e_{ji}$ in order to limit the conditions for the Robustness Issue 1 to happen. However, even considering that $e_{ij} = e_{ji}$ we have that
\begin{equation}
	\dot r = -\begin{pmatrix}\hat r_{12} & 0 & \hat r_{13} \\ \hat r_{21} & \hat r_{23} & 0 \\ 0 & \hat r_{32} & \hat r_{31}\end{pmatrix}\begin{pmatrix}e_{12} \\ e_{23} \\ e_{13} \end{pmatrix},
	\label{eq: eeq}
\end{equation}
where we note that the transition matrix has dimension $6\times 3$ with a non-trivial kernel.

\emph{Robustness Issue 2:} It can be checked that for $\dot r= c\mathbf{1}\in\mathbb{R}^6$ with $c\in\mathbb{R}$ the system (\ref{eq: eeq}) has a solution. Therefore, all the agents will move in pure translation with no necessarily $e_{ij}=0$. Then, according to Theorem \ref{th: main}, the estimations $\hat r_{ij}$ will not be updated. Note that not only the formation is in an undesired shape but it will drift away with velocity $c\mathbf{1}$. We have also confirmed numerically that the agents can converge to such a scenario.

\emph{Robustness Issue 3:} Since the error distances are directly measured, if the agents are at the desired inter-agent distances, then $\dot r = 0$ in (\ref{eq: rdis2}). If the initial conditions for the estimations of $\hat r_{ij}$ are close to the actual values, it might happen that the inter-agent distances converge faster to their desired values than the estimations, i.e., once the formation achieves the desired shape, it will stop moving but with an incorrect estimation of $r_{ij}$. How the system (\ref{eq: rdis2}) can converge to the desired shape once $\hat r_{ij}(t) \approx r_{ij}(t)$ can be explained by the similar scenario analyzed in \cite{bishop2015distributed}.

The consequences of these three robustness issues resemble to the ones described in \cite{mou2016undirected,de2017taming,meng2017three} but triggered by different causes. In order to avoid these problems in a general multi-agent system running a distance-based formation control algorithm, the evident goal is to recover the structure in (\ref{eq: rdis}). We propose the following three-steps algorithm.

\begin{algorithm}
\emph{1st step:} Only one agent per pair of neighboring agents will estimate $r_{ij}$ and share it with its neighbor. 

\emph{2nd step:} The initial conditions for $\hat r_{ij}$ must be close enough to the actual values in order to guarantee convergence to the desired shape as explained in \cite{bishop2015distributed}. 

\emph{3rd step:} In order to guarantee the convergence of $\hat r_{ij}$ to the actual values we force a steady-state rotational motion in the formation to satisfy the condition in Theorem \ref{th: main}. This can be done with the technique proposed in \cite{MaJaCa15} by adding mismatches (a constant number) to the error signals.
\label{al: 1}
\end{algorithm}

For example, in the scenario described in Example 6.1 and after following the three steps in Algorithm \ref{al: 1}, the implementation of the distance-based controller with estimated relative positions from distance measurements such that the desired equilateral triangle is achieved and the relative positions are estimated correctly is given by
\begin{equation}
	\begin{cases}
		\dot r_1 &= -\hat r_{12}(e_{12}-a) - \hat r_{13}(e_{13}-a)\\
		\dot r_2 &= \hat r_{12}(e_{12}+a) - \hat r_{23}(e_{23}-a) \quad a\in\mathbb{R}\\
		\dot r_3 &= \hat r_{13}(e_{13}+a) + \hat r_{23}(e_{23}+a).
	\end{cases}
	\label{eq: ex}
\end{equation}

\section{Numerical experiments}
\label{sec: exp}
We spread randomly the three agents on the plane under the dynamics (\ref{eq: ex}). We set the target distance $d = 10$ and the initial conditions for the estimators in the Kalman filter to an arbitrary number within $\pm 2$ the actual values in both coordinates $x$ and $y$. We set $a = 1$ in (\ref{eq: ex}) for forcing the rotational motion of the formation so that we satisfy the conditions in Theorem \ref{th: main}. The Figures \ref{fig: pos} and \ref{fig: est} show the correct convergence of the distances and the estimation errors to the desired values. In particular, as predicted, the steady-state rotational motion of the formation assists in the correct estimation of the relative positions between the agents.

\begin{figure}
\centering
	\includegraphics[width=0.49\columnwidth]{./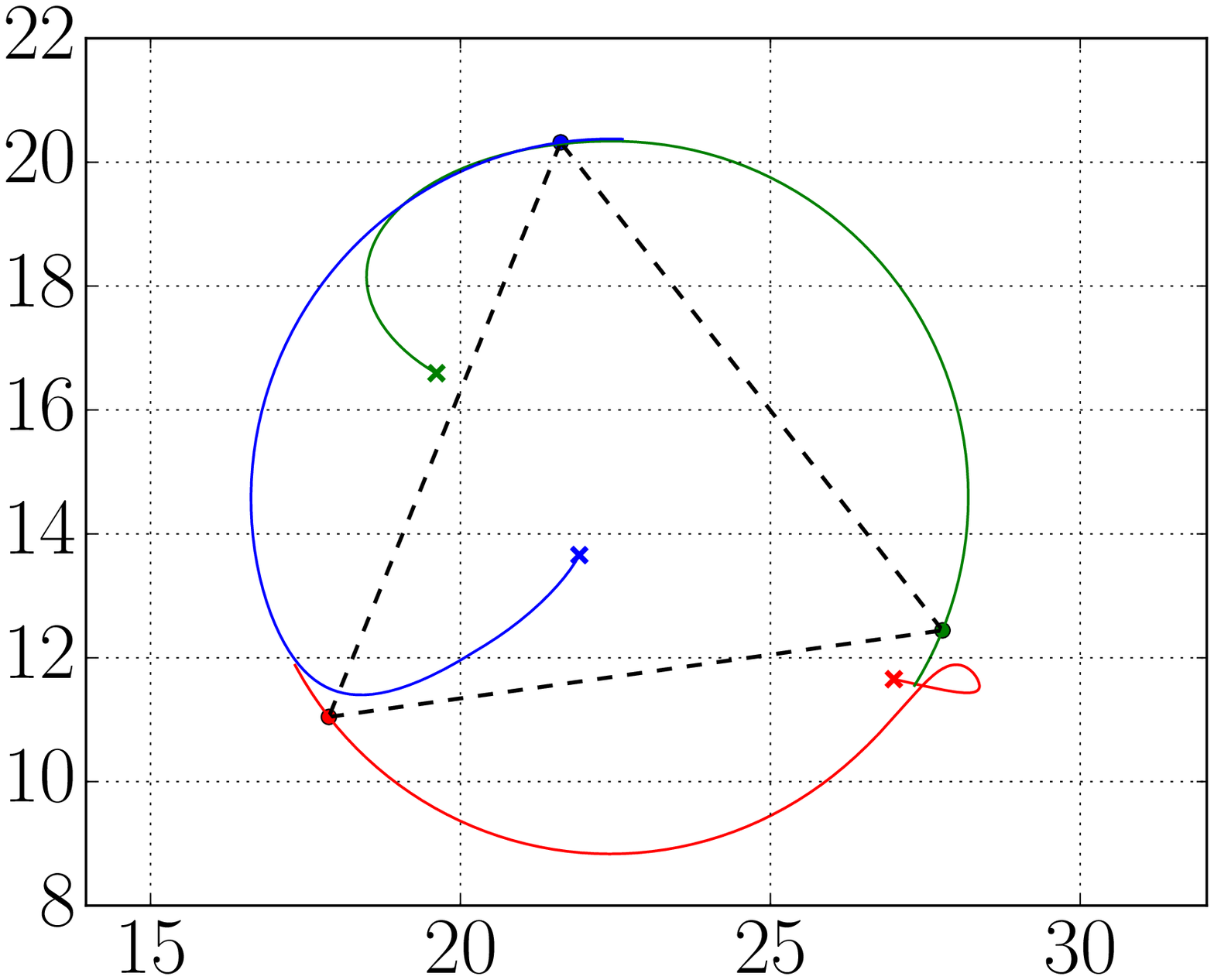}
	\includegraphics[width=0.43\columnwidth]{./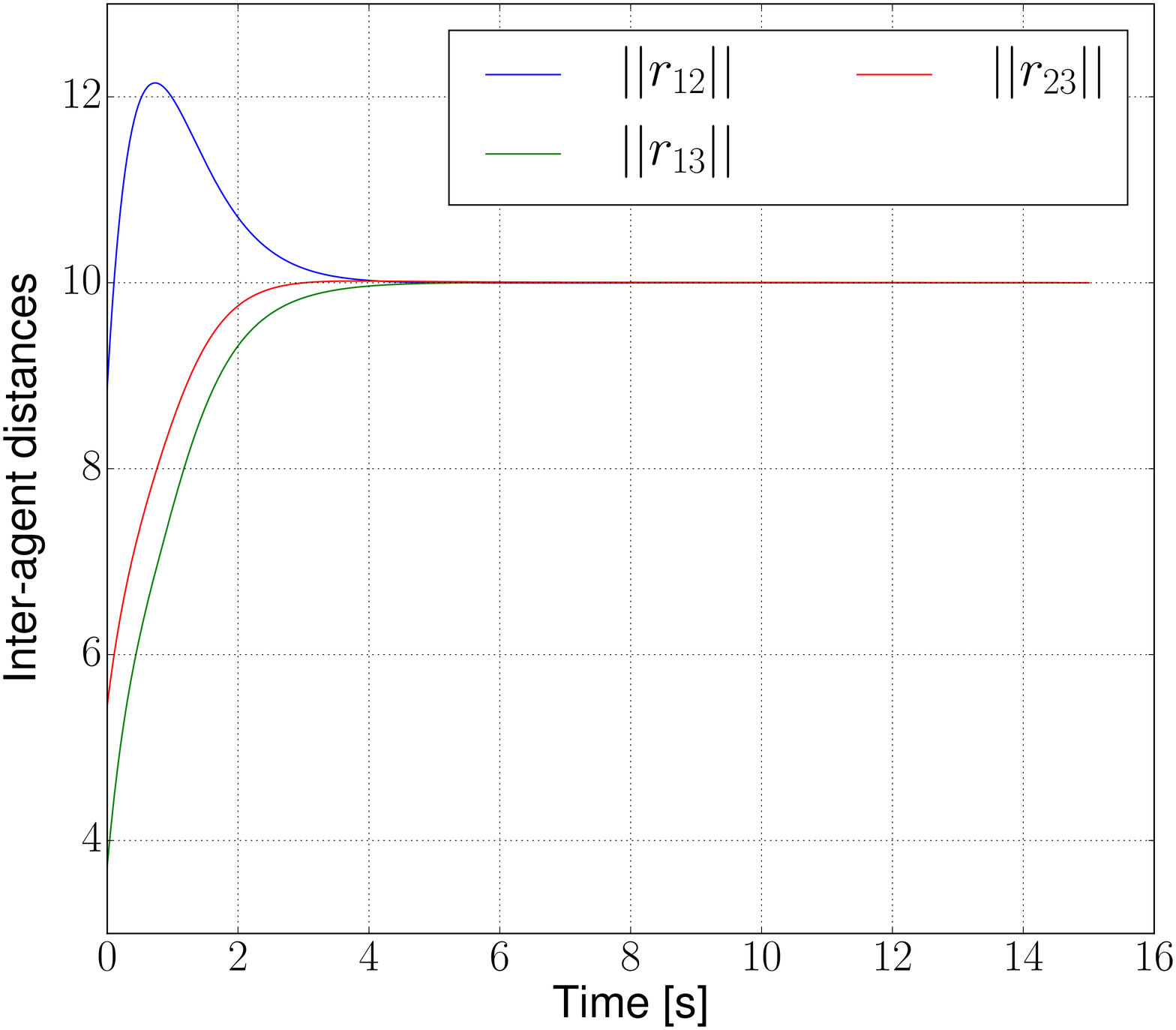}
	\caption{Trajectories described by the agents on the plane. The crosses denote the initial positions, and the dashed-lines describe the steady-state shape. The steady-state rotational motion assists in the correct estimation of the relative positions $r_{ij}$. On the right, the evolution of the actual inter-agent distances converging to the desired value of $10$ for an equilateral triangle.}
	\label{fig: pos}
\end{figure}

\begin{figure}
\centering
	\includegraphics[width=0.6\columnwidth]{./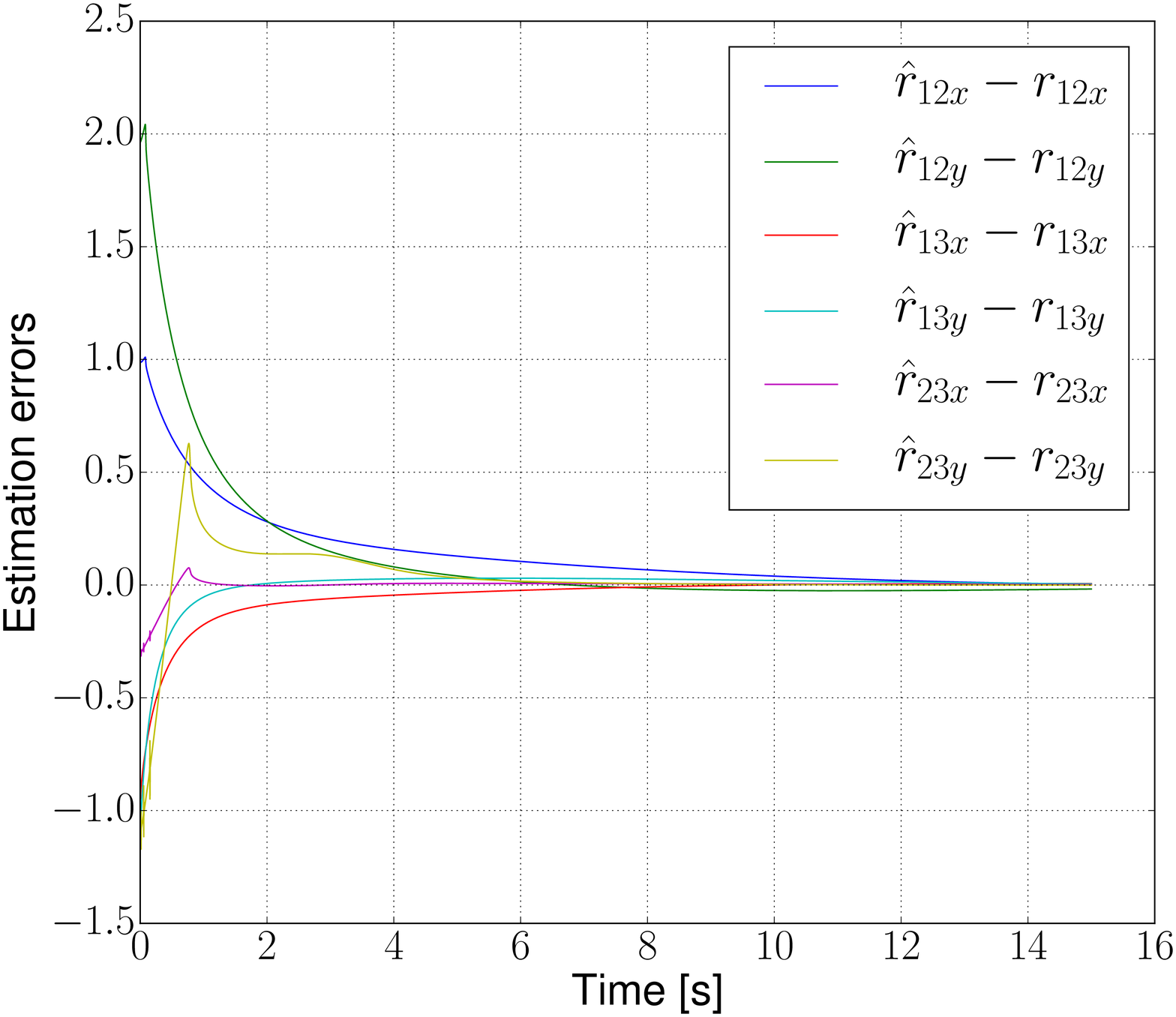}
	\caption{Time evolution of the estimation errors $\hat r_{ij} - r_{ij}$.}
	\label{fig: est}
\end{figure}

\bibliographystyle{IEEEtran} 
\bibliography{hector_ref}

\end{document}